\DeclareMathOperator{\supp}{supp}
\DeclareMathOperator{\conf}{conf}
\theoremstyle{remark}
\newtheorem{theorem}{\indent \emph{\textbf{Theorem}}}
\begin{document}

\makeatletter
\newcommand{\rmnum}[1]{\romannumeral #1}
\newcommand{\Rmnum}[1]{\expandafter\@slowromancap\romannumeral #1@}
\makeatother

\preprint{APS/123-QED}

\title{Quantum algorithm for association rules mining}

\author{Chao-Hua Yu}
\affiliation{State Key Laboratory of Networking and Switching Technology, Beijing University of Posts and Telecommunications, Beijing, 100876, China}
\affiliation{State Key Laboratory of Cryptology, P.O. Box 5159, Beijing, 100878, China}
\author{Fei Gao}
\email{gaof@bupt.edu.cn}
\affiliation{State Key Laboratory of Networking and Switching Technology, Beijing University of Posts and Telecommunications, Beijing, 100876, China}
\author{Qing-Le Wang}%
\affiliation{State Key Laboratory of Networking and Switching Technology, Beijing University of Posts and Telecommunications, Beijing, 100876, China}
\author{Qiao-Yan Wen}
\affiliation{State Key Laboratory of Networking and Switching Technology, Beijing University of Posts and Telecommunications, Beijing, 100876, China}

\date{\today}

\begin{abstract}
Association rules mining (ARM) is one of the most important problems in knowledge discovery and data mining. The goal of it is to acquire consumption habits of customers by discovering the relationships between items from a transaction database that has a large number of transactions and items. In this paper, we address ARM in the quantum settings and propose a quantum algorithm for the most compute intensive process in ARM, i.e., finding out the frequent 1-itemsets and 2-itemsets. In our algorithm, to mine the frequent 1-itemsets efficiently, we use the technique of amplitude amplification. To mine the frequent 2-itemsets efficiently, we introduce a new quantum state tomography scheme, i.e., pure-state-based tomography. It is shown that our algorithm is potential to offer polynomial speedup over the classical algorithm.
\begin{description}
\item[PACS numbers]
03.67.Dd, 03.67.Hk
\end{description}
\end{abstract}

\pacs{Valid PACS appear here}
\maketitle



\emph{Introduction.---}Quantum computing provides a paradigm that makes use of quantum mechanical principles, such as superposition and entanglement, to perform computing tasks in quantum systems (quantum computers) \cite{QCQI}. Just as classical algorithms run in the classical computers, a quantum algorithm is a step-by-step procedure run in the quantum computers for solving a certain problem, which, more interestingly, is expected to outperform the classical algorithms for the same problem. As of now, various quantum algorithms have been put forward to solve a number of problems faster than their classical counterparts \cite{QA}, and mainly fall into one of three classes \cite{PWShor}. The first class features the famous Shor's algorithm \cite{Shor} for large number factoring and discrete logarithm, which offers exponential speedup over the classical algorithms for the same problems. The second class are represented by the Grover's quantum search \cite{Grover} and its generalized version, i.e., amplitude amplification \cite{AA}, which achieve quadratic speedup over the classical search algorithm. The third class contains the algorithms for quantum simulation \cite{QS}, the original idea of which is suggested by Feynman \cite{F} to speed up the simulation of quantum systems using quantum computers.

In the past decade, quantum simulation has made great progress in efficient sparse Hamiltonian simulation \cite{BCK}, which underlies two important quantum algorithms, quantum algorithm for solving linear equations (called HHL algorithm) \cite{HHL} and quantum principal component analysis \cite{QPCA}. The former is to generate a pure quantum state encoding the solution of linear equations, which is potential to achieve exponential speedup over the best classical algorithm for the same problem. The latter is an efficient quantum state tomography on quantum sates with low-rank or approximately low-rank density matrix based on the technique of density matrix exponentiation. Inspired by these two algorithms, a number of quantum machine learning algorithms for big data have been proposed and potentially exhibit exponential speedup over the classical algorithms \cite{WBL,LMR,RML,CD}. For example, the quantum support vector machine was recently proposed for big data classification \cite{RML}. These quantum machine learning algorithms will evidently make the tasks of big data mining be accomplished more efficiently than their classical counterparts.

In this letter, we address another important problem in big data mining, association rules mining (ARM), in the quantum settings. The goal of ARM is to acquire consumption habits by mining association rules from a large transaction database \cite{DM,Apriori,PCY}. More formally, given a transaction database consisting of a large number of transactions and items, the task of ARM is to discover the association rules connecting two itemsets (an itemset is a set of items) $A$ and $B$ in the conditional implication form $A \Rightarrow B$, which implies that a customer who buys the items in $A$ also tends to buy the items in $B$. The core of ARM is to mine the itemsets that frequently occur in the transactions. In the classical regime, various algorithms for mining frequent itemsets have been proposed and well studied over the past decades \cite{DM}, the most famous one being the Apriori algorithm \cite{Apriori}. However, the information explosion today makes the database to be processed extremely large, which poses great challenges to the compute ability of classical computers for undertaking ARM by using these algorithms. Therefore, it is of great significance to propose more efficient algorithms for ARM.

In practice, the processing cost of mining frequent 1-itemsets ($k$-itemset is a set of $k$ items) and 2-itemsets dominates the total processing cost of mining all the frequent itemsets in ARM \cite{PCY}. Therefore, in this letter, we propose a quantum algorithm for mining frequent 1-itemsets and 2-itemsets. In our algorithm, the technique of amplitude amplification \cite{AA} is applied to efficiently mine frequent 1-itemsets. To efficiently mine frequent 2-itemsets, we introduce a new quantum state tomography scheme, i.e., pure-state-based tomography, which could be applied to efficiently reconstruct the approximately low-rank density matrix with nonnegative elements. It will be shown that this algorithm is potential to achieve polynomial speedup over the classical sampling algorithm.



\emph{Review of ARM.---}We first review some basic concepts and notations in ARM. Suppose a transaction database, the objective ARM deals with, contains $N$ transactions $T=\{T_1,T_2,\cdots,T_N \}$ and each one is a subset of $M$ items $I=\{I_1,I_2,\cdots,I_M \}$, i.e., $T_i \subseteq I$. It can also be seen as $N \times M$ binary matrix denoted by $D$ in which the element $D_{ij}=1 (0)$ means that the item $I_j$ is (not) contained in the transaction $T_i$.
The \emph{support} of an itemset $X$ is defined as the proportion of transactions in $T$ that contain all the items in $X$, i.e., $\supp(X)=\frac{|\{T_i| X \subseteq T_i\}|}{N}$. An association rule connects two disjoint itemsets $A$ and $B$ and is of the implication form $A \Rightarrow B$.
Its support (confidence) is defined as $\supp(A \Rightarrow B)=\supp(A\cup B)$ ($\conf(A \Rightarrow B)=\frac{\supp(A\cup B)}{\supp(B)}$). A rule is called frequent (confident) if its support (confidence) is not less than a prespecified threshold $min\_supp$ ($min\_conf$). The task of ARM is to find out the rules $A \Rightarrow B$ that are both frequent and confident. This can be achieved by two phases: (1) find out all the frequent itemsets $X$, defined as $\supp(X)>min\_supp$; (2)find out all the rules $A \Rightarrow B$ such that $A\cup B=X$ and it is confident. Because the second phase is much less costly than the first, the core work of ARM lies in the first phase. Furthermore, the processing cost in discovering frequent 1-itemsets and 2-itemsets, denoted by $F_1$ and $F_2$ respectively, dominates the total processing cost of the first phase. Therefore, how to reduce the cost of discovering $F_1$ and $F_2$ is of great significance.

Based on the Apriori property stating that all nonempty subset of a frequent itemset must also be frequent, mining $F_1$ and $F_2$ can be done in a level-wise manner. Firstly, one can compute all the supports of each item and pick out the frequent items to constitute $F_1$. Secondly, one can use $F_1$ to generate the candidate 2-itemsets $C_2=F_1 \bowtie F_1=\{\{I_i,I_j\}|I_i,I_j \in F_1, I_i\neq I_j\}$, compute the supports of itemsets in $C_2$ and then pick out the frequent ones to constitute $F_2$. In fact, computing the support of any 1-itemset or 2-itemset can be transformed into computing the inner product of two binary vectors. Suppose the transaction database corresponds to a $N\times M$ binary matrix $D=(\overrightarrow{d_1},\overrightarrow{d_2},\cdots,\overrightarrow{d_M})$, then the support of any 1-itemset $I_i$ can be computed by $S_{ii}=\frac{\overrightarrow{d_i}\cdot\overrightarrow{d_i}}{N}$ and the support of any 2-itemset $\{I_i,I_j\}$ can be computed by $S_{ij}=\frac{\overrightarrow{d_i}\cdot\overrightarrow{d_j}}{N}$. Therefore, the supports of all the 1-itemsets and 2-itemsets can be computed by $S=\frac{D^TD}{N}$, where $D^T$ is the transpose of the matrix $D$ and the supports of all the 1-itemsets (2-itemsets) correspond to the diagonal (off-diagonal) elements of $S$.

\emph{Pure-state-based quantum state tomography.---}
Before giving the details of our quantum algorithm for mining $F_1$ and $F_2$, we first introduce a new quantum state tomography, pure-state-base tomography, which will be used to mine $F_2$ as a key subroutine. Quantum state tomography is a process of reconstructing the density matrix of an unknown quantum state by performing series of measurements on a large number of copies of this state. In general, tomography on a $d$-dimensional quantum state requires $d^2$ measurement settings \cite{GLM}. However, in many cases, the density matrix of the state could be of low or approximately low rank \cite{QPCA,GLFBE}. That is, it can be approximately constructed by $r\ll d$ largest eigenvalues and their corresponding eigenvectors from the spectral decomposition. In this case, two recently invented tomography techniques, quantum state tomography via compressed sensing \cite{GLM} and quantum principal component analysis \cite{QPCA}, can be applied. Both schemes require only $\mathcal{O}(rdpoly(\log(d)))$ measurement settings, which offer significant improvement on large quantum systems.

However, all the above tomography schemes will perform postprocessing in the classical computer on the measurement outcomes to reconstruct the classical description of the density matrix,  and this will take time $\Omega(d^2)$ because $d^2$ elements of the density matrix need to be determined. Here we propose a new quantum state tomography scheme, named \emph{pure-state-based quantum state tomography}, that is potential to overcome this limit and more direct to obtain the elements of density matrix. In our scheme, we do not directly perform tomography on the state written as $\rho=\sum_{i,j=1}^d\rho_{ij}|i\rangle\langle j|$ but transform it into a pure state approximating $\frac{\sum_{i,j=1}^d\rho_{ij}|i\rangle|j\rangle}{\sqrt{\sum_{i,j=1}^d|\rho_{ij}|^2}}$. Once the pure state is created, one can perform measurements on the pure state to reveal the information of $\rho_{ij}$ with bounded error. To illustrate it, we first show how to prepare this pure state in the following.

\begin{theorem}
Suppose the a $d$-dimensional quantum state with density matrix $\rho=\sum_{i,j=1}^d\rho_{ij}|i\rangle\langle j|$ has eigenvalues $\lambda_j$ satisfying $\frac{1}{\kappa}\leq\lambda_j\leq1$ and can be generated in time $T_{\rho}$, then a pure state $|\psi\rangle$ approximating $|\psi_{\rho}\rangle=\frac{\sum_{i,j=1}^d\rho_{ij}|i\rangle|j\rangle}{\sqrt{\sum_{i,j=1}^d|\rho_{ij}|^2}}$ as $\||\psi\rangle-|\psi_{\rho}\rangle\|\leq \epsilon$ can be created taking $\mathcal{O}(\frac{\kappa^3}{\epsilon^3})$ copies of $\rho$ and time $\mathcal{O}(\frac{T_{\rho}\kappa^3}{\epsilon^3})$.
\end{theorem}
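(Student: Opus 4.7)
\emph{Proof proposal.} The plan is to recognize $|\psi_\rho\rangle$ as the normalized image of the maximally entangled state under the (non-unitary) action of $\rho\otimes I$, and prepare it by combining density matrix exponentiation with quantum phase estimation and a controlled rotation, in direct analogy with the HHL algorithm but multiplying by an eigenvalue rather than inverting it. The key identity is that, writing $\rho=\sum_j\lambda_j|u_j\rangle\langle u_j|$, one has $\sum_{i,j}\rho_{ij}|i\rangle|j\rangle=(\rho\otimes I)\sum_k|k\rangle|k\rangle=\sum_j\lambda_j|u_j\rangle|u_j^*\rangle$, so the target is $|\psi_\rho\rangle=\frac{\sum_j\lambda_j|u_j\rangle|u_j^*\rangle}{\sqrt{\sum_j\lambda_j^2}}$. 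It therefore suffices to produce the unnormalized vector $\sum_j\lambda_j|u_j\rangle|u_j^*\rangle$ as a sub-normalized branch of some larger quantum state, and then postselect.

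Concretely, first I would prepare the maximally entangled state $|\Phi^+\rangle=\frac{1}{\sqrt{d}}\sum_k|k\rangle|k\rangle=\frac{1}{\sqrt{d}}\sum_j|u_j\rangle|u_j^*\rangle$ using Hadamards and CNOTs. Then I would append a phase-estimation ancilla and run phase estimation on the first register with respect to the unitary $e^{-i\rho t}$, implemented via the density matrix exponentiation protocol of Lloyd--Mohseni--Rebentrost using copies of $\rho$. This yields the state $\approx\frac{1}{\sqrt{d}}\sum_j|u_j\rangle|u_j^*\rangle|\tilde\lambda_j\rangle$ with $|\tilde\lambda_j-\lambda_j|\leq\epsilon'$. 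Next, I would adjoin a single-qubit ancilla and apply the controlled rotation $|\tilde\lambda_j\rangle|0\rangle\mapsto|\tilde\lambda_j\rangle\bigl(\tilde\lambda_j|1\rangle+\sqrt{1-\tilde\lambda_j^2}|0\rangle\bigr)$, which is well-defined because each $\tilde\lambda_j\leq1$, uncompute the phase-estimation register, and finally measure the ancilla, postselecting on $|1\rangle$. The state left on the two main registers is proportional to $\sum_j\tilde\lambda_j|u_j\rangle|u_j^*\rangle$, matching $|\psi_\rho\rangle$ up to the phase-estimation error.

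For complexity, phase estimation to precision $\epsilon'$ needs Hamiltonian evolution time $t=\mathcal{O}(1/\epsilon')$, which density matrix exponentiation delivers using $\mathcal{O}(t^2/\epsilon')=\mathcal{O}(1/\epsilon'^{3})$ copies of $\rho$ and total time $\mathcal{O}(T_\rho/\epsilon'^{3})$. Postselection succeeds with probability $p=\frac{1}{d}\sum_j\tilde\lambda_j^2\gtrsim 1/\kappa^2$, using the assumption $\lambda_j\geq 1/\kappa$, and this probability can be boosted by amplitude amplification on the overall state-preparation circuit. Choosing $\epsilon'$ proportional to $\epsilon$ scaled by an appropriate power of $\kappa$, so that the relative error $\epsilon'/\lambda_{\min}$ in each eigenvalue estimate is absorbed by the renormalization, then yields the claimed $\mathcal{O}(\kappa^{3}/\epsilon^{3})$ copies and $\mathcal{O}(T_\rho\kappa^{3}/\epsilon^{3})$ time.

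The main obstacle I anticipate is the tight error-propagation analysis. One must carefully track how the Hamiltonian-simulation error in $e^{-i\rho t}$, the finite-precision error of phase estimation, and the coherent deviation introduced by the controlled rotation combine \emph{after} renormalization by the random postselection outcome. The spectral lower bound $\lambda_j\geq 1/\kappa$ is essential both to prevent the renormalization denominator from amplifying relative errors in the $\tilde\lambda_j$'s, and to keep the postselection probability bounded below by an inverse polynomial in $\kappa$ independent of the ambient dimension $d$; balancing these contributions is what pins down the precise choice of $\epsilon'$ and the resulting $\kappa^{3}/\epsilon^{3}$ scaling.
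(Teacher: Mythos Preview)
Your proposal is correct and is essentially the paper's own proof: prepare the maximally entangled state, run phase estimation for $e^{-i\rho t}$ via density matrix exponentiation, apply an eigenvalue-controlled rotation, uncompute, and amplitude-amplify the $|1\rangle$ branch. The only refinement needed is in the error bookkeeping you already flag: the paper budgets the Hamiltonian-simulation error at $\epsilon$ (not at the phase-estimation precision $\epsilon'=\epsilon/\kappa$), so the copies per run are $\mathcal{O}(t_0^2/\epsilon)=\mathcal{O}(\kappa^2/\epsilon^3)$, and the $\mathcal{O}(\kappa)$ amplitude-amplification overhead then gives the stated $\mathcal{O}(\kappa^3/\epsilon^3)$ without an extra factor of $\kappa$.
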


\begin{proof}
The idea for creating the pure state $|\psi\rangle$ is that performing the operation $\rho \otimes I$ on the state vector $\frac{\sum_{k=1}^{d}|k\rangle|k\rangle}{\sqrt{d}}$ will yield
\begin{eqnarray}
\frac{\sum_{k=1}^{d}\{(\sum_{i,j=1}^{d}\rho_{ij}|i\rangle\langle j|)|k\rangle\}|k\rangle}{\sqrt{d}}=\frac{\sum_{i,k=1}^{d}\rho_{ik}|i\rangle |k\rangle}{\sqrt{d}}.
\label{eq:1}
\end{eqnarray}
Performing the operation can be achieved by the technique of improved phase estimation together with controlled rotation operation which has been applied in HHL algorithm \cite{HHL} and a number of quantum machine learning algorithms \cite{WBL,LMR,RML,CD}. The detailed steps are presented as follows:

1. Prepare three quantum registers in the initial state
\begin{eqnarray}
(\sqrt{\frac{2}{t}}\sum_{\tau=0}^{t-1}\sin(\frac{\pi(\tau+1/2)}{t})|\tau\rangle)(\frac{\sum_{k=1}^{d}|k\rangle|k\rangle}{\sqrt{d}}).
\label{eq:2}
\end{eqnarray}

2. Perform the controlled unitary operation $\sum_{\tau=0}^{t-1}|\tau\rangle\langle\tau|\otimes e^{\frac{-i\rho \tau t_0}{t}}$ on the first two registers. Here $t_0$ and $t$ are two parameters introduced to adjust the accuracy. Implementing the operation requires the techniques of density matrix exponentiation and its controlled fashion\cite{QPCA}, which takes $t$ copies of $\rho$ and introduces error $\mathcal{O}(\frac{t_0^2}{t})$\cite{QPCA,RML}. Thus, to ensure the error is within $\epsilon$, $t$ is set $t=\mathcal{O}(\frac{t_0^2}{\epsilon})$.

3. Apply Fourier transformation on the first register to obtain a state close to $\frac{\sum_{j,k=1}^d|\lambda_j\rangle \langle u_j|k\rangle |u_j\rangle|k\rangle}{\sqrt{d}}$, where $\lambda_j$ and $|u_j\rangle$ are the eigenvalues and eigenvectors of $\rho$, i.e., $\rho=\sum_{j=1}^d\lambda_j |u_j\rangle\langle u_j|$. The error of eigenvalue in the first register is within $\mathcal{O}(\frac{1}{t_0})$ and the error induced of final state is within $\mathcal{O}(\frac{\kappa}{t_0})$ \cite{HHL}. Therefore, in order to ensure the final error is within $\epsilon$, $t_0$ should be set $t_0=\mathcal{O}(\frac{\kappa}{\epsilon})$ and $t=\mathcal{O}(\frac{\kappa^2}{\epsilon^3})$.

4. Introduce an auxiliary qubit in the state $|0\rangle$ and perform the controlled unitary operation on the eigenvalue register and the auxiliary qubit to generate the state approximating
$\frac{\sum_{j,k=1}^d|\lambda_j\rangle \langle u_j|k\rangle |u_j\rangle|k\rangle(\sqrt{1-|C\lambda_j|^2}|0\rangle+C\lambda_j|1\rangle)}{\sqrt{d}}$,
where $C\in\mathcal{O}(max_{j}\lambda_j)^{-1}$.

5. Erase the eigenvalue register and the state is near to $\frac{\sum_{j,k=1}^d\langle u_j|k\rangle |u_j\rangle|k\rangle(\sqrt{1-|C\lambda_j|^2}|0\rangle+C\lambda_j|1\rangle)}{\sqrt{d}}$.

6. Measure the last qubit until obtaining the outcome $|1\rangle$. Then we obtain the state $|\psi\rangle$ close to
\begin{eqnarray}
\frac{\sum_{j,k=1}^d\lambda_j\langle u_j|k\rangle |u_j\rangle|k\rangle}{B}&=&\frac{(\rho \otimes I)\sum_{k=1}^{d}|k\rangle|k\rangle}{B}\nonumber \\
&=& \frac{\sum_{i,k=1}^{d}\rho_{ik}|i\rangle |k\rangle}{B},
\label{eq:3}
\end{eqnarray}
i.e., $|\psi_{\rho}\rangle$, where $B=\sqrt{\sum_{l=1}^d \lambda_l^2}=\sqrt{\sum_{i,j=1}^d|\rho_{ij}|^2}$. Obviously, this state vector is proportional to the vector (1). The probability of obtaining $|1\rangle$ is $\sum_{j=1}^{d}\frac{|C\lambda_j|^2}{d} \in \Omega(\frac{1}{\kappa^2})$. Thus $\mathcal{O}(\kappa^2)$ measurements are needed to obtain this outcome with a large probability. The technique of amplitude amplification \cite{AA} can be applied to reduce the number of repetitions to $\mathcal{O}(\kappa)$.

According to the steps 3 and 6, we can see that the number of copies of $\rho$ required to create $|\psi\rangle$ scales as $\mathcal{O}(t\kappa)=\mathcal{O}(\frac{\kappa^3}{\epsilon^3})$ and thus the time complexity scales as $\mathcal{O}(\frac{T_{\rho}\kappa^3}{\epsilon^3})$. Here the time for simulating the SWAP operator (in step 2) for time $\frac{t_0}{t}$ is too small to be neglected \cite{RML}.
\end{proof}

However, for the state $\rho$ of low or approximately low rank, the lower bound $\frac{1}{\kappa}$ will be extremely small, or equivalently, $\kappa$ will be extremely large. According to the theorem above, it will make creating the pure state $|\psi\rangle$ very expensive in time. To overcome this problem, inspired by \cite{HHL,RML}, a constant $\epsilon_{eff} = \mathcal{O}(1)$ is chosen and only the eigenvalues $\epsilon_{eff} \leq\lambda_j\leq1$ are taken into considered. In this case, in step 6, the probability of obtaining $|1\rangle$ will be $\Omega(\frac{\epsilon_{eff}^2}{d})$. Consequently, it will take $\mathcal{O}(\frac{\sqrt{d}}{\epsilon_{eff}^3\epsilon^3})$ copies of $\rho$ and time $\mathcal{O}(\frac{T_{\rho}\sqrt{d}}{\epsilon_{eff}^3\epsilon^3})$ to create $|\psi\rangle$.

After creating the pure state $|\psi\rangle$, one can perform measurements on the state to reveal the estimates of $\rho_{ij}$. Take the case that $\rho$ is approximately low-rank and all of $\rho_{ij}$ are nonnegative real numbers as an example. One can perform the measurements in the computational basis on $|\psi\rangle$ and get the outcome $|i\rangle|j\rangle$ with proportion $p_{ij}$, then $\rho_{ij}$ can be estimated by $B\sqrt{p_{ij}}$. Here $B=\sqrt{\sum_{l=1}^d \lambda_l^2}$ can be estimated by estimating $\lambda_l$ via quantum principal component analysis \cite{QPCA}. This takes a little cost because $\rho$ is approximately low-rank. Moreover, assuming that there are $d'$ significantly large elements in $\rho$, or equivalently in $|\psi\rangle$, $\mathcal{O}(\frac{d'}{\epsilon^2})$ measurements in the computational basis are needed to approximate $|\psi\rangle$ and ensure sum of the squared errors is within $\epsilon^2$. The total runtime of tomography is $\mathcal{O}(\frac{T_{\rho}d'\sqrt{d}}{\epsilon_{eff}^3\epsilon^5})$.

\emph{Quantum Algorithm.---}

\emph{1. Mining frequent 1-itemsets.}
Mining frequent 1-itemsets requires estimating the supports of all 1-itemsets and determining the frequent 1-itemsets.
To achieve that in the quantum settings, we assume that the oracle accessing each element of the database binary matrix $D$ is provided. More precisely, it is a unitary operator $U_O$ acting on the computational basis,
\begin{eqnarray}
|i\rangle|j\rangle|0\rangle \xrightarrow{U_O} |i\rangle|j\rangle|D_{ij}\rangle.
\end{eqnarray}
This oracle can be implemented via quantum random access memory that takes time $\mathcal{O}(\log(NM))$ \cite{GLM}.

Provided with the above oracle, one can use the technique of amplitude amplification \cite{AA} to create a quantum state whose density matrix is proportional to the support matrix $S$ so that measuring the state in the computational basis can reveal the supports of all the 1-itemsets, i.e., $S_{ii}$. First, one prepare three quantum registers in the state $|\varphi_1\rangle=\frac{(\sum_{i=1}^{N}|i\rangle)\otimes(\sum_{j=1}^{M}|j\rangle)\otimes |0\rangle}{\sqrt{NM}}$. Secondly, perform $U_O$ on $|\varphi_1\rangle$ to yield the state $|\varphi_2\rangle=\frac{\sum_{i=1}^{N}\sum_{j=1}^{M}|i\rangle|j\rangle|D_{ij}\rangle}{\sqrt{NM}}$. Thirdly, we apply amplitude amplification to search the items $D_{ij}=1$ in the last qubit and then measure it until getting $|1\rangle$ to obtain the state $|\varphi_3\rangle=\frac{\sum_{i=1}^{N}\sum_{j=1}^{M}D_{ij}|i\rangle|j\rangle}{\sqrt{W}}$, where $W$ is the number of overall 1's in $D$, i.e., $W=\sum_{i=1}^M\sum_{j=1}^M D_{ij}$. The oracle complexity scales as $\mathcal{O}(\sqrt{\frac{NM}{W}})=\mathcal{O}(\sqrt{\frac{M}{a}})$ and the time complexity scales as $\mathcal{O}(\sqrt{\frac{M}{a}}\log(MN))$, where $a=\frac{W}{N}$ is the average number of items in each transaction and can be estimated by quantum counting with oracle complexity $\mathcal{O}(\sqrt{M})$ \cite{QC}. In practice, since a customer generally only buy few items in one transaction, $a$ generally scales as $\mathcal{O}(1)$. As a consequence, the oracle complexity comes to $\mathcal{O}(\sqrt{M})$. The density matrix of the second register of $|\varphi_3\rangle$ is $\rho=\frac{D^TD}{tr(D^TD)}=\frac{D^TD}{W}=\frac{S}{a}$, thus $S=a\rho$. Finally, measure the state $\rho$ in the computational basis and the outcome $i$ occurs with probability $\rho_{ii}=\frac{S_{ii}}{a}$, thus the supports $S_{ii}$ are estimated.

To obtain $F_1$, the classical Apriori algorithm determinately computes the supports of all the items which takes time $MN$. These supports can also be estimated via sampling technique. Take $N_c$ samples for estimating each support so that $S_{ii}$ can be estimated with squared error $\mathcal{O}(\frac{S_{ii}(1-S_{ii})}{N_c})$ and thus the sum of the squared errors for estimating all the supports scale as $\mathcal{O}(\frac{a}{N_c})$. To make the sum of the squared errors be less than a preset value $\epsilon^2$, $N_c$ is chosen as $\mathcal{O}(\frac{a}{\epsilon^2})=\mathcal{O}(\frac{1}{\epsilon^2})$ and thus the total time taken to estimate all the $M$ supports via sampling scales as $\mathcal{O}(\frac{M}{\epsilon^2})$. In our algorithm, we sample $\rho$ by measuring it in the computational basis. Suppose we take $N_q$ samples, $\rho_{ii}$ can be estimated with square error $\mathcal{O}(\frac{\rho_{ii}(1-\rho_{ii})}{N_q})$ and the sum of the squared errors scales as $\mathcal{O}(\frac{1}{N_q})$. Since $\rho_{ii}=\frac{S_{ii}}{a}$, the sum of the squared errors for estimating  $\rho_{ii}$ should be $a^2$ times that for estimating $S_{ii}$. Therefore, $N_q$ should be chosen $N_q=\mathcal{O}(\frac{a^2}{\epsilon^2})=\mathcal{O}(\frac{1}{\epsilon^2})$ and thus our algorithm takes time $\mathcal{O}(\frac{\sqrt{M}\log(MN)}{\epsilon^2})$ to mine frequent 1-itemsets.

\emph{2. Mining frequent 2-itemsets.}
Suppose there are $M_1$ frequent 1-itemsets, $F_1=\{I_{f_1},I_{f_2},\cdots,I_{f_{M_1}}\}$, based on which we can mine the frequent 2-itemsets $F_2$. To do that, we need to compute all the supports of the candidate 2-itemsets $C_2=F_1 \bowtie F_1=\{\{I_{f_i},I_{f_j}\}|I_{f_i},I_{f_j} \in F_1, I_{f_i} \neq I_{f_j}\}$ and pick out the frequent ones whose supports are not less than $min\_sup$. Just as the matrix $S$, all the supports of itemsets in $C_2$ are placed in the upper (or lower) off-diagonal elements of the matrix $\overline{S}=\frac{D_f^{T}D_f}{N}$, where $D_f=(\overrightarrow{d_{f_1}},\overrightarrow{d_{f_2}}, \cdots, \overrightarrow{d_{f_{M_1}}})$ is a part of $D$. Following the ideas of mining frequent 1-itemsets and pure-state-based quantum state tomography, our quantum algorithm proceeds as follows to mine frequent 2-itemsets:

1. Using the method in mining frequent 1-itemsets, a quantum state with density matrix $\sigma=\frac{D_f^TD_f}{tr(D_f^TD_f)}\propto \overline{S}$ is created taking time $\mathcal{O}(\sqrt{M_1}\log(M_1N))$. Note that the average number of items in each row of $D_f$, denoted by $a_f$, can also be estimated by quantum counting \cite{QC}. It is evidently not greater than $a$ and thus scale as $\mathcal{O}(1)$.

2. Perform the pure-state-based quantum state tomography on the state $\sigma$. Since $\sigma$ is generally low-rank, the time taken for tomography is $\mathcal{O}(\frac{M_1'M_1\log(M_1N)}{\epsilon_{eff}^3\epsilon^5})$, where we assume $\sigma$ (or equivalently $\overline{S}$) has $M_1'$ significantly large elements and $\epsilon_{eff}$ is the lower bound of eigenvalues that are taken into account.

3. Estimate the the supports of candidate 2-itemsets in $C_2$. In the pure-state-based quantum state tomography on $\sigma$, the pure state close to $|\psi_{\sigma}\rangle=\sum_{ij}\psi_{\sigma}^{ij}|i\rangle|j\rangle$ is created, in which $\psi_{\sigma}^{ij}=\frac{\sigma_{ij}}{\sqrt{\sum_{j}\gamma_j^2}}$ and $\gamma_j$ are eigenvalues of $\sigma$. Since $\sigma$ is generally approximately low-rank, the eigenvalues $\gamma_j$ can be estimated by quantum principal component analysis \cite{QPCA} in very little time. Then, the support of $\{I_{f_i},I_{f_j}\}$, $\overline{S}_{ij}=\frac{a_f\psi_{\sigma}^{ij}}{\sqrt{\sum_{j}\gamma_j^2}}$ are estimated.

4. Pick out the frequent candidate 2-itemsets to constitute the set $F_2$.

To obtain $F_2$, the Apriori algorithm directly computes the supports ($\overline{S}_{ij}$) of $\frac{M_1(M_1-1)}{2}$ candidate 2-itemsets in $C_2$ and thus takes runtime $\frac{M_1(M_1-1)N}{2}$.
To reduce the complexity, the sampling technique can also be applied and $\mathcal{O}(\frac{M_1^2}{\epsilon^2})$ samples for estimating each support are used to make sum of the squared error is within $\epsilon^2$. In our algorithm, $\mathcal{O}(\frac{M_1^{'}}{\epsilon^2})$ samples \cite{NS} in step 2 are required to make sum of the squared errors of estimating these supports is also within $\epsilon^2$ and thus total runtime scales as $\mathcal{O}(\frac{M_1'M_1\log(M_1N)}{\epsilon_{eff}^3\epsilon^5})$, which offers polynomial speedup over the classical sampling algorithm when $M_1'=\mathcal{O}(M_1^p)$ and $p<1$. In practice, since there will be a large number of pairs of items $\{I_i,I_j\}$ have weak associations, the speedup can be achieved with a high probability.

The comparison of time complexity between our quantum algorithm and classical algorithms for mining frequent 1-itemsets and 2-itemsets is given in the table \ref{tab:table1}. In is shown that our algorithm is exponentially faster than the Apriori algorithm due to the use of sampling technique and potentially polynomially faster than the classical sampling algorithm.

\begin{table}[b]
\caption{\label{tab:table1}%
Comparison of time complexity between our quantum algorithm and classical Apriori algorithm and sampling algorithm for mining frequent 1-itemsets $F_1$ and 2-itemsets $F_2$.}
\begin{ruledtabular}
\begin{tabular}{ccc}
\multirow{2}{*}{Algorithm} &\multicolumn{2}{c}{Time complexity} \\
\cline{2-3}
 &Mining $F_1$ &Mining $F_2$ \\ \hline
Apriori  &$\mathcal{O}(MN)$ &$\mathcal{O}(M_1^2N)$ \\
Sampling &$\mathcal{O}(\frac{M}{\epsilon^2})$ &$\mathcal{O}(\frac{M_1^2}{\epsilon^2})$ \\
Quantum   &$\mathcal{O}(\frac{\sqrt{M}\log(MN)}{\epsilon^2})$ &$\mathcal{O}(\frac{M_1'M_1\log(M_1N)}{\epsilon_{eff}^3\epsilon^5})$\\

\end{tabular}
\end{ruledtabular}
\end{table}
%
%


\emph{Conclusion.---}
In this letter, we have provided a quantum algorithm for ARM with focus on the most compute intensive process in ARM, mining frequent 1-itemsets and 2-itemsets. In our algorithm, the techniques of amplitude amplification and pure-state-based quantum state tomography are introduced to make our algorithm potentially achieves polynomial speedup over the classical sampling algorithm. Moreover, since limited quantum oracles accessing the database are used in our algorithm, data privacy for database can be achieved to some degree. We hope this algorithm can be useful for designing more quantum algorithms for big data mining tasks.

\section*{Acknowledgements}
This work is supported by NSFC (Grant Nos. 61272057, 61572081).


\begin{thebibliography}{ss3}
\bibitem{QCQI} M. A. Nielsen and I. L Chuang, \emph{Quantum computation and quantum information} (Cambridge university press, 2010).
\bibitem{QA} A. Montanaro, Quantum algorithms: an overview, eprint arXiv:quant-ph/1511.04206 (2015).
\bibitem{PWShor} P. W. Shor, Why haven't more quantum algorithms been found? Journal of ACM, \textbf{50}, 87 (2003)
\bibitem{Shor} P. W. Shor, Algorithms for quantum computation: Discrete logarithms and factoring, in \emph{Proc. 35th Annual Symposium on the Foundations of Computer Science}, Santa Fe, New Mexico, 1994, p.124.
\bibitem{Grover} L. K. Grover, Quantum mechanics helps in searching for a needle in a haystack, Phys. Rev. Lett. \textbf{79}, 325 (1997).
\bibitem{AA} G. Brassard, P. H{\o}yer, M. Mosca, and A. Tapp, \emph{Quantum Amplitude Aplification and Estimation}, Contemporary Mathematics Series Millenium Volumn \textbf{305} (AMS, New York, 2002).

\bibitem{QS}I. M. Georgescu, S. Ashhab and F. Nori, Quantum simulation, Rev. Mod. Phys. \textbf{86}, 153 (2014).
\bibitem{BCK} D. W. Berry, A. M. Childs, and R. Kothari, Hamiltonian simulation with nearly optimal dependence on all parameters, eprint arXiv:quant-ph/1501.01715, (2015).
\bibitem{F} R. P. Feynman, Simulating physics with computers, Int. J. Theor. Phys. \textbf{21}, 467 (1982).



\bibitem{HHL} A. W. Harrow, A. Hassidim and S. Lloyd, Quantum algorithm for linear systems of equations, Phys. Rev. Lett. \textbf{103}, 150502 (2009).
\bibitem{QPCA} S. Lloyd, M. Mohseni	and P. Rebentros, Quantum principal component analysis, Nature Physics \textbf{10}, 631 (2014).
\bibitem{WBL} N. Wiebe, D. Braun and S. Lloyd, Quantum algorithm for data fitting, Phys. Rev. Lett. \textbf{109}, 050505 (2012).
\bibitem{LMR} S. Lloyd, M. Mohseni, P. Rebentrost, Quantum algorithms for supervised and unsupervised machine learning, eprint arXiv:quant-ph/1307.0411 (2013).
\bibitem{RML} P. Rebentrost, M. Mohseni and S. Lloyd, Quantum support vector machine for big data classification, Phys. Rev. Lett. \textbf{113}, 130503 (2014).
\bibitem{CD} I. Cong, L. Duan, Quantum discriminant analysis for dimensionality reduction and classification, eprint arXiv:quant-ph/1510.00113 (2015).
\bibitem{DM} J. W. Han, M. Kamber, and J. Pei, \emph{Data mining: Concepts and Techniques} (Morgan Kaufmann, 2011) 3rd ed.. 
\bibitem{Apriori} R Agrawal, R Srikant, Fast algorithms for mining association rules, in \emph{Proceedings of the
1994 international conference on very large data bases (VLDB¡¯94)}, Santiago, Chile, 1994, p. 487.
\bibitem{PCY} J. S. Park, M. S. Chen, P. S. Yu, An effective hash-based algorithm for mining association rules, In: \emph{Proceeding of the 1995 ACM-SIGMOD international conference on management of data (SIGMOD¡¯95)}, San Jose, CA, 1995, p. 175-186.
\bibitem{GLM} V. Giovannetti, S. Lloyd and L. Maccone, Quantum random access memory, Phys. Rev. Lett. \textbf{100}, 160501 (2008).
\bibitem{GLFBE} D. Gross, Y.-K. Liu, S. T. Flammia, S. Becker and J. Eisert, Quantum State Tomography via compressed sensing, Phys. Rev. Lett. \textbf{105}, 150401 (2010).
\bibitem{GR} L. Grover and T. Rudolph, Creating superpositions that correspond to efficiently integrable probability distributions, eprint arXiv:quant-ph/0208112 (2002).

\bibitem{QC} G. Brassard, P. H{\o}yer, and A. Tapp, Quantum counting, Automata, Languages and Programming, p. 820-831 (1998).
\bibitem{NS} In our algorithm, we estimate the supports of itemsets in $C_2$ by sampling the pure state $|\psi_{\sigma}\rangle=\sum_{ij}\psi_{\sigma}^{ij}|i\rangle|j\rangle$ and estimating $\psi_{\sigma}^{ij}$. Since $\overline{S}_{ij}=\frac{a_f\psi_{\sigma}^{ij}}{\sqrt{\sum_{j}\gamma_j^2}}$, the sum of the squared errors for estimating $\psi_{\sigma}^{ij}$ is set $\frac{\sum_j\gamma_j^2\epsilon^2}{a_f^2}$ to make sure the sum of squared error for estimating these supports is also within $\epsilon^2$. Moreover, $\sum_j\gamma_j^2 \in \mathcal{O}(1)$ due to the approximately low rank of $\sigma$. Hence, $\mathcal{O}(\frac{M_1^{'}a_f^2}{(\sum_j\gamma_j^2)\epsilon^2})=\mathcal{O}(\frac{M_1^{'}}{\epsilon^2})$ samples in step 2 are required.
\end{thebibliography}
\end{document}